\newtheorem{theorem}{Theorem}
\newtheorem*{theorem*}{Theorem}
\newtheorem*{example*}{Example}
\newtheorem*{lemma*}{Lemma}
\newtheorem{corollary}[theorem]{Corollary}
\newtheorem*{corollary*}{Corollary}
\begin{document}

\pagestyle{empty} 


\title{An Improvement on Rank of Explicit Tensors}
\author{Benjamin Weitz}
\maketitle
\begin{abstract}
We give constructions of $n^k \times n^k \times n$ tensors of rank at least $2n^k - O(n^{k-1})$. As a corollary we obtain an $[n]^r$ shaped tensor with rank at least $2n^{\lfloor r/2\rfloor} - O(n^{\lfloor r/2 \rfloor - 1})$ when $r$ is odd. The tensors are constructed from a simple recursive pattern, and the lower bounds are proven using a partitioning theorem developed by Brockett and Dobkin. These two bounds are improvements over the previous best-known explicit tensors that had ranks $n^k$ and $n^{\lfloor r/2\rfloor}$ respectively. \end{abstract}

\clearpage
\pagestyle{plain} 

\section{Introduction}
An important and well-studied property of linear operators, equivalently matrices, is their rank. Much is understood about the rank of matrices over fields, and an efficient algorithm exists for the calculation of the rank of an explicit matrix. However, a closely related problem, calculating the rank of a tensor, a generalized version of a matrix, has been shown to be {\bf NP}-complete\cite{Has90}, and so is unlikely to have an efficient algorithm. Due to the intractability of the problem, very few results have been shown on this subject.
\subsection{Importance of Tensor Rank}
The rank of a tensor is relevant and important in several different settings. Fast matrix multiplication, a problem that is of incredible importance, can be improved by improving the upper bound on the rank of a related tensor. A recent paper by Ran Raz proved two theorems relating lower bounds on the rank of tensors and lower bounds on the size of arithmetic formulas:
\begin{itemize}
\item {\bf Theorem}: {\it Let $A: [n]^r \rightarrow \mathbb{F}$ be a tensor such that $r\leq O(\log{n}/\log{\log{n}})$. If there exists a formula of size $n^c$ for the polynomial 
\[f_A(x_{1,1}, \dots, x_{r,n}) = \sum_{i_1,\dots i_r\in[n]} A(i_1,\dots, i_r) \cdot \prod_{j=1}^r x_{j,i_j}\]
then the tensor rank of $A$ is at most $n^{r\cdot(1-2^{-O(c)})}$}\cite{Raz10}. \\
\item {\bf Corollary}: {\it Let $A: [n]^r$ be a tensor such that $r\leq O(\log{n}/\log{\log{n}})$. If the tensor rank of $A$ is $\geq n^{r\cdot(1-o(1))}$ then there is no polynomial size formula for the polynomial $f_A$} \cite{Raz10}.\end{itemize}These two theorems give a strong motivation behind finding explicit $[n]^r$ tensors of high rank. In this paper we give an explicit hypercube tensor with rank approaching $2n^{\lfloor r/2\rfloor}$, an improvement over the previous best-known example by a constant factor of $2$. 
\subsection{Methodology}
For each integer $k$, we will give an $n^k \times n^k \times n$ tensor with rank at least $2n^k - O(n^{k-1})$, an improvement over the previous best-known $n^k$. To do so, we will use a partitioning theorem developed by Brockett and Dobkin in \cite{Bro76}. This theorem allows us to lower bound the rank of a tensor that is formed by concatenating or gluing together other tensors, provided they are sufficiently different. We will construct a tensor recursively by continually gluing together three copies of a smaller tensor. The partitioning theorem will allow us to lower bound the rank of the tensor at each step, and thus the final tensor as well. As a corollary we will construct an $\underbrace{n \times \dots \times n}_{r \text{ times}}$ tensor of rank $2n^{\lfloor r/2\rfloor} - O(n^{\lfloor r/2\rfloor - 1})$ when $r$ is odd by viewing the first construction under an isomorphism. This is an improvement over the previous best-known  $n^{\lfloor r/2\rfloor}$ by a constant factor.
 
\subsection{Definitions}
Throughout this paper, $F$ will denote a field. Let $A \in F^{n_1\times n_2\times n_3}$. If 
\[A = x_1 \otimes x_2 \otimes x_3\]
for $x_i \in F^{n_i}$ and $A$ is nonzero, then $A$ is called a \emph{simple} or rank-$1$ tensor. The rank of a general tensor $A$ is defined as the minimal number $r$ such that we can write
\begin{equation}
A = \sum_{i=1}^r B_i
\end{equation}
where each $B_i$ is a simple tensor. This is a natural extension of matrix rank, because if $x_3 = 1$ then the rank of $A$ agrees with the matrix rank. Throughout this paper, $R[A]$ will denote the rank of $A$. 
\subsubsection{Slices, Concatenations, and the Characteristic Matrix}
Let $A \in F^{n_1\times n_2\times n_3}$, fix a positive integer $1 \leq k \leq n_3$ and let $B \in F^{n_1 \times n_2 \times 1}$ satisfy 
\[B_{ij} = A_{ijk}\]
Then $B$ is called the $k$th slice of $A$. We will denote the $k$th slice of a tensor $A$ as $A_k$. The \emph{concatenation} of tensors $A \in F^{n_1\times n_2\times m}$ and $B \in F^{n_1\times n_2\times m'}$, denoted $AB \in F^{n_1\times n_2\times (m+m')}$, is the tensor such that
\[AB_{ijk} = \left\{\begin{tabular}{ll} $A_{ijk}$ & if $1 \leq k \leq m$ \\ $B_{ij(k-m)}$ & if $m < k \leq m+m'$ \end{tabular}\right.\]
The $m+m'$ slices of the concatenation are the $m$ slices of $A$ followed by the $m'$ slices of $B$. Also $AB$ and $BA$ differ only by permutations of the indices in the third dimension, so $R[AB] = R[BA]$. 
\\
\\
The \emph{characteristic matrix} of $A \in F^{n_1\times n_2\times n_3}$ is a matrix $A(s)$ with indeterminants, i.e. $A(s) \in E^{n_1\times n_2}$, where $E = F \cup S$ and $S = \{s_i\}_{i=1}^{n_3}$ is a set of indeterminates with
\[A(s) = \sum_{i=1}^{n_3} s_iA_i\]
so each indeterminate represents the values on a different slice. Define $\dim s = |S| = n_3$. Define the \emph{column} (resp. \emph{row}) \emph{rank} to be the maximal number of linearly independent columns (resp. rows) as in \cite{Bro76}. Note that the row and column rank are not necessarily equal, for example $A(s) = \left[s_1\hspace{4mm}s_2\right]$ has column rank$=2$ and row rank$=1$. We also sometimes write $R[A(s)]$ for $R[A]$. To avoid trivialities, we usually work with \emph{nondegenerate} tensors; a tensor $A$ is nondegenerate if no nontrivial linear combination of its slices vanish and its characteristic matrix $A(s)$ has full row and column rank. An analogy of concatenation can be defined with characteristic matrices as well. Let $A(s)$ and $B(t)$ be two characteristic matrices of the same dimensions, and let $\dim s = n$ and $\dim t = m$. We define $C(u) = A(s) + B(t)$ as 
\[C(u) = \sum_{i=1}^n A_is_i + \sum_{j=1}^m B_jt_j\]
where $u = s \cup t$. Note $C(u)$ is the characteristic matrix of $AB$, so this addition can be considered as a concatenation.

\subsection{The Partitioning Theorem}
The main tool we use in our construction is the partitioning theorem developed by Brockett and Dobkin in \cite{Bro76}, and we write it here for easy referral:
\begin{theorem}\label{brdob1}
Let $G(s)$ be a nondegenerate characteristic matrix, and let one of the following cases hold:
\begin{enumerate}
\item[(i)] $G(s) = \left[\begin{tabular}{c} $G_1(s)$ \\ $G_2(s)$ \end{tabular}\right]$ \\
\item[(ii)] $G(s) = \left[\begin{tabular}{cc} $G_1(s)$ & $G_2(s)$\end{tabular}\right]$ \\
\item[(iii)]  $G(s) = G_1(u) + G_2(v)$
\end{enumerate}
Then for each case we have
\begin{enumerate}
\item[(i)] $R[G(s)] \geq \min_M R[G_1(s) + NG_2(s)] + \text{row rank }G_2(s)$ \\
\item[(ii)] $R[G(s)] \geq \min_N R[G_1(s) + G_2(s)M] + \text{column rank }G_2(s)$ \\
\item[(iii)] $R[G(s)] \geq \min_T R[G_1(u) + G_2(Tu)] + \dim v$
\end{enumerate}
for matrices $M$, $N$, and $T$ sized so that the two summands are the same shape and the addition is well-defined, and juxtaposition means regular matrix multiplication. 
\end{theorem}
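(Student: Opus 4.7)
The plan is to handle all three parts by the same template: start from any rank-$r$ decomposition $G(s) = \sum_{i=1}^{r} u_i v_i^T \phi_i(s)$ with $r = R[G]$, split the vectors $u_i$ or $v_i$ according to the block structure (or split $\phi_i$ according to the partition of indeterminates in case (iii)), and then exhibit a choice of the auxiliary matrix $N$, $M$, or $T$ that kills enough rank-$1$ terms to produce a decomposition of the modified tensor that witnesses the claimed upper bound.

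For case (i), writing $u_i = \begin{bmatrix} p_i \\ q_i \end{bmatrix}$ according to the row partition gives $G_2(s) = \sum_i q_i v_i^T \phi_i(s)$. Any $w \in F^{n_1^{(2)}}$ annihilating every $q_i$ satisfies $w^T G_2(s) = 0$, so the left null space of $G_2(s)$ contains $(\mathrm{span}\{q_i\})^\perp$; consequently the $q_i$ span a space of dimension at least $k := \text{row rank }G_2(s)$. I would then pick indices $i_1,\dots,i_k$ with $q_{i_1},\dots,q_{i_k}$ linearly independent and solve for a matrix $N$ satisfying $Nq_{i_j} = -p_{i_j}$, which is possible by the independence. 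Substituting yields
\[ G_1(s) + NG_2(s) = \sum_i (p_i + Nq_i)\, v_i^T \phi_i(s), \]
in which the $k$ terms indexed by $i_j$ drop out, exhibiting a rank-$(r-k)$ decomposition. Case (ii) follows by the same reasoning with $v_i$ split column-wise, or equivalently by transposing and invoking (i).

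Case (iii) requires a little more. Write $G(u,v) = \sum_i x_i y_i^T (\alpha_i(u) + \beta_i(v))$ and assemble the coefficient vectors $\beta_i \in F^m$ into an $r \times m$ matrix $B$, where $m = \dim v$. Nondegeneracy of $G$ forces the $m$ slices of $G_2$ to be linearly independent (a nontrivial $c \in F^m$ with $\sum_k c_k G_{2,k} = 0$ would give a nontrivial slice relation for $G$ itself), hence $Bc = 0 \Rightarrow c = 0$ and $B$ has full column rank $m$. Pick indices $i_1,\dots,i_m$ with $\beta_{i_1},\dots,\beta_{i_m}$ linearly independent and solve the $m \times m$ system $T^T \beta_{i_j} = -\alpha_{i_j}$; substituting ensures $\alpha_{i_j}(u) + \beta_{i_j}(Tu) = 0$ for each $j$, so
\[ G_1(u) + G_2(Tu) = \sum_i x_i y_i^T\bigl(\alpha_i(u) + \beta_i(Tu)\bigr) \]
has the $i_j$-terms zeroed out and therefore rank at most $r - m = R[G] - \dim v$.

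The main obstacle I foresee is case (iii), where one must translate the tensor-level notion of nondegenerate $G$ into a rank statement about the coefficient matrix $B$ without presupposing any linear independence of the outer products $x_i y_i^T$ appearing in the decomposition. The key observation is that any $c \in \ker B$ already produces the \emph{zero} slice-combination tautologically, before any dependence among the $x_i y_i^T$ can enter, and so is ruled out directly by nondegeneracy. Once that is in hand, cases (i) and (ii) reduce to routine block row/column operations.
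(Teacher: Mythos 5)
Your argument is correct, and it is essentially the standard substitution argument: the paper itself gives no proof of this theorem (it is quoted from Brockett and Dobkin \cite{Bro76}), and your elimination of rank-one terms by choosing $N$, $M$, or $T$ to cancel a maximal independent set of the $q_i$ (resp.\ $\beta_i$) is exactly how the cited source proceeds. The only cosmetic slip is calling $T^T\beta_{i_j}=-\alpha_{i_j}$ an ``$m\times m$ system'' (the unknown $T^T$ is $\dim u\times m$; what matters is that the $\beta_{i_j}$ form a basis of $F^m$, which you establish), and you might note that nondegeneracy is genuinely needed only in case (iii), where your tautological-vanishing observation is the right one.
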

This theorem essentially states that if two halves of a tensor "don't overlap too much", then each slice of the second half must add at least one to the rank. A special case of "don't overlap too much" is given in the following theorem:

\begin{theorem}\label{brdob2}
Let $G_1(s)$, $G_2(s)$, and $G_3(s)$ all be nondegenerate characteristic matrices. Then we have
\begin{enumerate}
\item[(i)] $R\left[\begin{tabular}{cc} $G_1(s)$ & $0$ \\ $G_2(s)$ & $G_3(s)$ \end{tabular}\right] \geq \max\{R[G_1(s)] + \text{column rank }G_3(s), R[G_3(s)] + \text{row rank }G_1(s)\}$ \\
\item[(ii)] $R\left[\begin{tabular}{cc} $G_1(s) + G_2(t)$ & $G_3(t)$\end{tabular}\right]\geq \max\{R[G_1(s)] + \text{column rank }G_3(t), R[G_3(t)] + \dim s\}$ \\
\item[(iii)] $R\left[\begin{tabular}{c} $G_1(s) + G_2(t)$ \\ $G_3(t)$\end{tabular}\right] \geq \max\{R[G_1(s)] + \text{row rank }G_3(t), R[G_3(t)] + \dim s\}$
\end{enumerate}
\end{theorem}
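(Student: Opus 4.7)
The plan is to obtain each of the six inequalities by a single application of the partitioning theorem (Theorem~\ref{brdob1}) to an appropriate partition of the block matrix, and then to discharge the resulting minimum using the standard fact that any row, column, or slice concatenation has rank at least the rank of either part.

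For case~(i), I would view $\left[\begin{array}{cc}G_1(s) & 0 \\ G_2(s) & G_3(s)\end{array}\right]$ as a horizontal concatenation of its two column blocks and apply Theorem~\ref{brdob1}(ii), which yields
\[
R[G] \;\geq\; \min_M R\left[\begin{array}{c} G_1(s) \\ G_2(s) + G_3(s)M\end{array}\right] + \text{column rank}\left[\begin{array}{c}0\\G_3(s)\end{array}\right].
\]
The inner rank is at least $R[G_1(s)]$ because vertical concatenation cannot decrease rank, and the trailing column rank equals that of $G_3(s)$, giving the first bound. For the second bound I would permute rows (which preserves tensor rank) to move the zero block to the lower left and then apply Theorem~\ref{brdob1}(i) to the resulting vertical concatenation, symmetrically.

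For cases~(ii) and (iii), the bound retaining $R[G_1(s)]$ follows immediately from Theorem~\ref{brdob1}(ii) and (i) respectively: the inner expression $G_1(s) + (G_2(t) + G_3(t)M)$ (and its vertical analogue) is a slice concatenation in the disjoint indeterminate sets $s$ and $t$, so its rank is at least $R[G_1(s)]$. For the $\dim s$ bound, I would split the matrix along the two indeterminate sets and invoke Theorem~\ref{brdob1}(iii); for example, in case~(ii), write $[G_1(s)+G_2(t)\ G_3(t)] = [G_1(s)\ 0] + [G_2(t)\ G_3(t)]$ and apply part~(iii) with the $s$-block playing the role of the ``second'' summand, so that $\dim s$ is picked up on the right. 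The remaining minimum is then $\min_T R[\,[G_2(u)+G_1(Tu)\ G_3(u)]\,]$, which is at least $R[G_3(u)]=R[G_3(t)]$. Case~(iii) is handled by the same split with vertical concatenation in place of horizontal.

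The only nontrivial point I foresee is verifying that each subpartition produced by the reduction is itself nondegenerate, as required by Theorem~\ref{brdob1}; this should follow mechanically from the assumed nondegeneracy of $G_1$, $G_2$, $G_3$ together with the block-triangular or disjoint-variable structure, and I do not expect it to be the main obstacle. Otherwise, the same template — trade a row or column block for an added rank contribution and absorb the leftover minimum using the rank of a retained sub-block — disposes of all six inequalities uniformly.
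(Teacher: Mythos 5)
The paper itself gives no proof of this statement---it is imported from Brockett and Dobkin \cite{Bro76} and explicitly framed as a special case of Theorem~\ref{brdob1}---and your derivation is precisely the intended one: each of the six bounds comes from a single application of Theorem~\ref{brdob1} to the appropriate partition (permuting rows where needed), with the residual minimum absorbed by the fact that a concatenation in any of the three dimensions has rank at least that of either part. Your argument is correct; the only slight imprecision is that Theorem~\ref{brdob1}'s nondegeneracy hypothesis is on the whole matrix $G(s)$ rather than on the subpartitions, but that is exactly what the block-triangular and disjoint-indeterminate structure lets you verify from the nondegeneracy of $G_1$, $G_2$, $G_3$, as you anticipate.
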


\section{The Main Result}
In this section we give a construction that yields $n^k \times n^k \times n$ tensors of rank approaching $2n^k$. These numbers are, to the best of our knowledge, the largest known rank of any explicit tensor of these shapes. As a corollary, for $r$ odd, these constructions allow us to give an $[n]^r$ shaped tensor of rank approaching $2n^{\lfloor r/2 \rfloor}$, another improvement to the best of our knowledge. The first step is to prove a lower bound for a block tensor:
\begin{theorem}\label{main} 
Let $A \in F^{m \times n \times p}$ be nondegenerate, $B \in F^{m \times n' \times p'}$ be nondegenerate, and $C \in F^{m' \times n \times p'}$ be nondegenerate and let $E \in F^{m\times n \times p'}$, and let ${\bf 0}$ be the tensor of zeroes of appropriate dimensions to be concatenated, and let
\[M = \left[\begin{tabular}{cc} $AE$ & ${\bf 0}B$ \\ ${\bf 0}C$ & ${\bf 00}$\end{tabular}\right]\]
then
\[R[M] \geq R[A] + \text{column rank }B(t) + \text{row rank }C(t)\]\end{theorem}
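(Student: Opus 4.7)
The plan is to apply the partitioning machinery of Theorems~\ref{brdob1} and \ref{brdob2} in a two-step nested fashion. First I identify the characteristic matrix of $M$, using indeterminates $s=(s_1,\dots,s_p)$ for the first block of slices (those in which $A$ lives) and $t=(t_1,\dots,t_{p'})$ for the second block (those in which $B$, $C$, and $E$ live):
\[
M(s,t)=\left[\begin{tabular}{cc} $A(s)+E(t)$ & $B(t)$ \\ $C(t)$ & $0$ \end{tabular}\right].
\]
Before invoking anything I would verify that $M$ is nondegenerate: a vanishing linear combination of its slices forces the $B$-coefficients to vanish (reading the upper-right block, using nondegeneracy of $B$), then the $C$-coefficients (lower-left block, using nondegeneracy of $C$), and finally the $A$-coefficients (upper-left block, using nondegeneracy of $A$); full row/column rank of $M(s,t)$ follows by a similar block argument.

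The outer step is to split $M(s,t)$ horizontally and apply Theorem~\ref{brdob1}(i) with top half $G_1(s,t)=[\,A(s)+E(t)\;\;B(t)\,]$ and bottom half $G_2(s,t)=[\,C(t)\;\;0\,]$. The zero in the upper-right of the bottom half makes $\text{row rank}\,G_2=\text{row rank}\,C(t)$, which furnishes the desired $+\,\text{row rank}\,C(t)$ term. For any $N\in F^{m\times m'}$ produced by the theorem, the matrix $G_1+NG_2$ collapses to
\[
[\,A(s)+(E(t)+NC(t))\;\;B(t)\,],
\]
which is again a $1\times 2$ horizontal concatenation in the form demanded by Theorem~\ref{brdob2}(ii).

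The inner step is to apply Theorem~\ref{brdob2}(ii) to this concatenation, identifying its $G_1(s)\mapsto A(s)$, $G_2(t)\mapsto E(t)+NC(t)$, and $G_3(t)\mapsto B(t)$. The theorem immediately gives $R[A(s)+E(t)+NC(t)\;\;B(t)]\geq R[A]+\text{column rank}\,B(t)$, and critically this lower bound is independent of $N$, so it survives the minimization over $N$ from the outer step. Combining the two bounds yields $R[M]\geq R[A]+\text{column rank}\,B(t)+\text{row rank}\,C(t)$ as claimed.

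The main obstacle I anticipate is the bookkeeping around the nondegeneracy hypotheses of Theorem~\ref{brdob2}: the middle matrix $E(t)+NC(t)$ that gets fed into the role of $G_2(t)$ may fail to be nondegenerate for specific choices of $N$. However, since the conclusion of Theorem~\ref{brdob2}(ii) only refers to $R[G_1(s)]$, $\text{col rank}\,G_3(t)$, and $\dim s$, and since Theorem~\ref{brdob2}(ii) itself is a corollary of Theorem~\ref{brdob1}(iii) whose proof does not actually use nondegeneracy of the $G_2$ summand, this requirement can be waived. Modulo that technical point, the argument is the clean two-step application of the partitioning theorem sketched above.
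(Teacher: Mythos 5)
Your proposal is correct and follows essentially the same route as the paper's own proof: the identical horizontal split into $G_1=[\,A(s)+E(t)\;\;B(t)\,]$ and $G_2=[\,C(t)\;\;0\,]$, Theorem~\ref{brdob1}(i) for the outer step, and Theorem~\ref{brdob2}(ii) for the inner step with the same identifications. Your extra care about the nondegeneracy of $E(t)+NC(t)$ is a point the paper silently skips, and your resolution of it is sound.
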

\begin{proof}
First, transforming into characteristic matrices,
\[M(u) = \left[\begin{tabular}{cc} $A(s) + E(t)$ & $B(t)$ \\ $C(t)$ & $0$ \end{tabular}\right]\] 
with $u = s\cup t$.
We partition 
\[M(u) = \left[\begin{tabular}{c} $G_1(u)$ \\ $G_2(u)$ \end{tabular}\right]\] 
with 
\begin{align*}
G_1(u) &= \left[\begin{tabular}{cc} $A(s)+E(t)$ & $B(t)$\end{tabular}\right] \\
G_2(u) &= \left[\begin{tabular}{cc} $C(t)$ & $0$\end{tabular}\right]\end{align*}  
By Theorem \ref{brdob1}, 
\[R[M(u)] \geq \min_N R[G_1(u) + NG_2(u)] + \text{row rank }G_2(u)\]
By Theorem \ref{brdob2}, 
\begin{align*}
R[G_1(u) + NG_2(u)] &= R\left[A(s)+(E+NC)(t) \hspace{4mm} B(t)\right] \\
&\geq \max\{R[A(s)] + \text{column rank }B(t), R[B(t)] + \dim s\}\end{align*} 
Since $\text{row rank }G_2(u) = \text{row rank }C(s)$, we have 
\[R[M(u)] \geq \text{row rank }C(s) + R[A(s)] + \text{column rank }B(t).\]
\end{proof}
This theorem is the key to our construction. We recursively build a tensor as follows: pick a positive integer $k$ and let $A^{(0)} = I_{n^{k-1}}$, and define
\[A^{(i+1)} = \left[\begin{tabular}{cc} $A^{(i)}{\bf 0}$ & ${\bf 0}A^{(i)}$ \\ ${\bf 0}A^{(i)}$ & ${\bf 00}$\end{tabular}\right]\]
the main result is 
\begin{theorem}\label{nknkn}
Pick $l = \log n$. Then the tensor $A^{(l)}$ above has dimensions $n^k \times n^k \times n$ and satisfies $R[A^{(l)}] \geq 2n^k - O(n^{k-1})$. 
\end{theorem}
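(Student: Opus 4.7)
The strategy is to apply Theorem \ref{main} recursively, tracking three inductive quantities at each level: $r_i := R[A^{(i)}]$, together with $c_i$ and $d_i$, the column and row ranks of the characteristic matrix $A^{(i)}(s)$. The block shape of the recursion,
\begin{equation*}
A^{(i+1)} = \begin{bmatrix} A^{(i)}\mathbf{0} & \mathbf{0}A^{(i)} \\ \mathbf{0}A^{(i)} & \mathbf{00} \end{bmatrix},
\end{equation*}
matches the hypothesis of Theorem \ref{main} exactly with $A = B = C = A^{(i)}$ and $E = \mathbf{0}$, so once the theorem's nondegeneracy hypotheses are verified it will immediately yield $r_{i+1} \geq r_i + c_i + d_i$.

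First I would verify dimensions: each application doubles all three tensor dimensions, so from $A^{(0)} = I_{n^{k-1}}$ of shape $n^{k-1} \times n^{k-1} \times 1$, after $l = \log n$ steps the result has shape $n^k \times n^k \times n$, as claimed. Second, I would show by induction that $A^{(i)}$ is nondegenerate and its characteristic matrix has full row and column rank, so $c_i = d_i = 2^i n^{k-1}$. The base case is clear because $A^{(0)}(s) = s_1 I_{n^{k-1}}$ is full rank. For the inductive step, the characteristic matrix of $A^{(i+1)}$ has the form
\begin{equation*}
A^{(i+1)}(u) = \begin{bmatrix} A^{(i)}(s) & A^{(i)}(t) \\ A^{(i)}(t) & 0 \end{bmatrix}
\end{equation*}
with disjoint indeterminate sets $s$ and $t$. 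Full column rank follows from the zero bottom-right block: any kernel relation with coefficients $\alpha_j$ on the first block-column and $\beta_j$ on the second forces $\alpha_j = 0$ from the bottom (by inductive full column rank of $A^{(i)}(t)$), and then $\beta_j = 0$ from the top. Row rank is symmetric. Nondegeneracy of slices is equally immediate: a vanishing combination of the $2^{i+1}$ slices restricts to two independent vanishing combinations of slices of $A^{(i)}$, which are trivial by induction.

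With the hypotheses of Theorem \ref{main} verified at every level, the recursion gives $r_{i+1} \geq r_i + c_i + d_i = r_i + 2^{i+1} n^{k-1}$. Starting from $r_0 = R[I_{n^{k-1}}] = n^{k-1}$ and summing the resulting geometric series,
\begin{equation*}
r_l \geq n^{k-1} + \sum_{i=0}^{l-1} 2^{i+1} n^{k-1} = n^{k-1}\bigl(2^{l+1} - 1\bigr) = 2n^k - n^{k-1},
\end{equation*}
which is $2n^k - O(n^{k-1})$, as desired.

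The only real obstacle is the bookkeeping: carrying nondegeneracy, column rank, and row rank through the induction in parallel so that Theorem \ref{main} applies at each step. Once the column-rank doubling argument via the zero block is in place, the recursion on $r_i$ collapses to a clean geometric sum, and the claim follows by a single arithmetic calculation.
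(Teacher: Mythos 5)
Your proposal is correct and follows essentially the same route as the paper: the same identification of the recursion with Theorem \ref{main} (taking $A=B=C=A^{(i)}$, $E=\mathbf{0}$), the same induction establishing nondegeneracy and full row/column rank $2^i n^{k-1}$, and the same geometric sum yielding $2n^k - n^{k-1}$. Your verification of nondegeneracy via the block structure of $A^{(i+1)}(u)$ is slightly more explicit than the paper's one-line induction, but it is the same argument in substance.
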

\begin{proof}
For any $i$, it is clear that $A^{(i)}$ is a $2^in^{k-1} \times 2^in^{k-1} \times 2^i$ tensor. Furthermore, an easy induction shows that $A^{(i)}(s)$ is nondegenerate by noting that $A^{(i)}$ always has at least one slice with full row and column rank, and a nontrivial linear combination of slices of $A^{(i)}$ that vanish is such a combination of slices of $A^{(i-1)}$ as well. Thus 
\[\text{row rank }A^{(i)}(s) = \text{column rank }A^{(i)}(s) = 2^in^{k-1}\] 
and $A^{(i)}$ is nondegenerate. By Theorem \ref{main},
\[R[A^{(i+1)}] \geq R[A^{(i)}] + \text{row rank }A^{(i)}(s) + \text{column rank }A^{(i)}(s) = R[A^{(i)}] + 2^{i+1}n^{k-1}\]
Then a straightforward induction shows 
\begin{align*}
R[A^{(i)}] &\geq R[A^{(0)}] + \sum_{j=0}^{i-1} 2^{j+1}n^{k-1} \\
&= n^{k-1} + 2(2^i - 1)n^{k-1}
\end{align*}
setting $l = \log n$, we have $R[A^{(l)}] \geq 2n^k - n^{k-1}$ and $A_i \in F^{n^k\times n^k\times n}$.\end{proof}
This construction allows us to improve on the previous best-known explicit hypercube tensor by taking the preimage of these tensors under the canonical isomorphism. 
\begin{corollary}\label{improvnr}
Let $r$ be odd, $k = \lfloor r/2\rfloor$, $A^{(l)}$ as above, and let $\phi$ be the canonical isomorphism
\begin{align*}
\phi: F^{\overbrace{n \times \dots \times n}^{r \text{ times}}} &\rightarrow F^{n^k \times n^k \times n} \\
x_1 \otimes \dots \otimes x_r &\mapsto (x_1 \otimes \dots \otimes x_k) \otimes (x_{k+1} \otimes \dots \otimes x_{2k}) \otimes x_r
\end{align*}
then $\phi^{-1}(A^{(l)})$ is an $\underbrace{n \times \dots \times n}_{r \text{ times}}$ tensor with rank at least $2n^{\lfloor r/2\rfloor} - O(n^{\lfloor r/2\rfloor - 1})$. 
\end{corollary}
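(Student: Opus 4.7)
The plan is to exploit the elementary observation that the grouping isomorphism $\phi$ carries simple tensors in $F^{n \times \cdots \times n}$ to simple tensors in $F^{n^k \times n^k \times n}$, and therefore can only decrease tensor rank. Concretely, I will show that $R[\phi^{-1}(A^{(l)})] \geq R[A^{(l)}]$, and then invoke Theorem~\ref{nknkn} to finish.

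First I would verify shapes: since $A^{(l)} \in F^{n^k \times n^k \times n}$ and $r = 2k+1$, the preimage $\phi^{-1}(A^{(l)})$ lies in the hypercube $F^{n\times \cdots \times n}$ ($r$ factors), as claimed. The main step is the rank inequality. Suppose $\phi^{-1}(A^{(l)}) = \sum_{i=1}^R y_1^{(i)} \otimes \cdots \otimes y_r^{(i)}$ is a minimum rank decomposition, so $R = R[\phi^{-1}(A^{(l)})]$. Applying the linear map $\phi$ term-by-term gives
\[
A^{(l)} \;=\; \sum_{i=1}^R \bigl(y_1^{(i)} \otimes \cdots \otimes y_k^{(i)}\bigr) \otimes \bigl(y_{k+1}^{(i)} \otimes \cdots \otimes y_{2k}^{(i)}\bigr) \otimes y_r^{(i)},
\]
and each summand is manifestly a simple tensor in $F^{n^k \times n^k \times n}$. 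Hence $R[A^{(l)}] \leq R = R[\phi^{-1}(A^{(l)})]$. Combining this with the lower bound $R[A^{(l)}] \geq 2n^k - O(n^{k-1})$ from Theorem~\ref{nknkn}, and recalling $k = \lfloor r/2 \rfloor$, yields the corollary.

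There is no genuine obstacle here; the content of the corollary lies entirely in Theorem~\ref{nknkn}. The one point worth emphasizing is that the reduction goes in only one direction: $\phi$ sends simple tensors to simple tensors because tensor product is associative in the appropriate sense, but $\phi^{-1}$ does not respect rank-$1$ structure, which is exactly why passing to the hypercube can only \emph{increase} the rank. This is the standard mechanism by which lower bounds on flat three-way tensors transfer to lower bounds on hypercubes of odd order.
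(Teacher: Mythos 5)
Your proposal is correct and follows essentially the same route as the paper: apply $\phi$ term-by-term to a minimal decomposition of $\phi^{-1}(A^{(l)})$, observe that $\phi$ preserves simple tensors, conclude $R[A^{(l)}] \leq R[\phi^{-1}(A^{(l)})]$, and invoke Theorem~\ref{nknkn}. The only difference is presentational: the paper phrases the rank inequality as a proof by contradiction, while your direct argument is cleaner.
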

\begin{proof}
We show that for any tensor $B \in F^{\overbrace{n\times \dots \times n}^{r \text{ times}}}$, $R[\phi(B)] \leq R[B]$. Assume the opposite towards a contradiction. Then if 
\[B = \sum_{i=1}^{R[B]} D_i\]
for simple tensors $D_i$, we have
\[\phi(B) = \sum_{i=1}^{R[B]} \phi(D_i)\]
and as $D_i$ is simple, so is $\phi(D_i)$, but since $R[\phi(B)] > R[B]$ this contradicts minimality of $R[\phi(B)]$, thus $R[\phi(B)] \leq R[B]$, so clearly $R[A^{(l)}] \leq R[\phi^{-1}(A^{(l)})]$.
\end{proof} 
To our knowledge, these are the best-known ranks for explicit $[n]^r$ and $n^k \times n^k \times n$ tensors for any $k$, including the important cube $n \times n \times n$ tensors. 

\section{Conclusion}
In this paper we have presented an improvement to about $2n^k$ from the previous highest rank explicit tensors for the $n^k \times n^k \times n$ shape. This extends to an improvement for the shape $[n]^r$ when $r$ is odd. These tensors were constructed by using Brockett and Dobkin's partitioning theorem in a recursive manner. However, using this theorem imposes a restriction on the quality of the lower bounds. In order to improve further, we need to either improve the partitioning theorem or develop a different method.
\subsection{Open Problems}
\begin{itemize}
\item The most important open problem is the one presented as the motivation for this paper. The improvements in this paper do not come anywhere close to the $n^{r(1-o(1))}$ threshold for hypercube tensors. Any explicit tensor with this rank would imply super-polynomial lower bounds on certain functions as per Ran Raz's recent theorem\cite{Raz10}. Any attempt to develop examples of high-rank tensors should keep this goal in mind. \\
\item An improvement to Brockett and Dobkin's partitioning theorem would be extremely useful. The same techniques presented here would be more powerful and perhaps improve by an increase in the exponent, rather than a constant factor. 
\end{itemize}
\subsection{Additional Notes}
This paper is the result of research done at Caltech from June 2010 to August 2010 as part of the SURF program. I worked under Chris Umans, Professor of Computer Science, and I'd like to thank him for all his help and advice while working on this project. Additionally, in between the writing and the submission of this article, an independent article was published by Boris Alexeev, Michael Forbes, and Jacob Tsimerman\cite{Aft11} that gives, among other things, an explicit $n^k \times n^k \times n$ $\{0,1\}$-tensor with rank at least $2n^k + n - \Theta(k\log n)$. The techniques in this paper are similar to those described here, so the two bounds are very close, but the one given by Alexeev, Forbes, and Tsimerman has better lower-order terms. Interested parties can read the paper here \url{http://arxiv.org/abs/1102.0072}.


\begin{thebibliography}{10}
\bibitem{Aft11} Boris Alexeev, Michael Forbes, and Jacob Tsimerman, \emph{Tensor Rank: Some Lower and Upper Bounds}, submitted to \url{arxiv.org} on February 1st, 2011. 
\bibitem{Atk79} M.D. Atkinson and N.M. Stephens, \emph{On the Maximal Multiplicative Complexity of a Family of Bilinear Forms}, Linear Algebra and Its Applications Volume 27, October 1979.
\bibitem{Bro76} Roger W. Brockett and David Dobkin, \emph{On the Optimal Evaluation of a Set of Bilinear Forms}, Linear Algebra and Its Applications Volume 19, Issue 3, 1978.
\bibitem{Has90} Johan H\r{a}stad, \emph{Tensor Rank is NP-complete}, Journal of Algorithms, December 1990. 
\bibitem{How78} Thomas D. Howell, \emph{Global Properties of Tensor Rank}, Linear Algebra and Its Applications Volume 22, December 1978.
\bibitem{JaJa79} Joseph Ja' Ja', \emph{Optimal Evaluation of Pairs of Bilinear Forms}, SIAM Journal on Computing Volume 8, Issue 3, 1979.
\bibitem{JaJa86} Joseph Ja' Ja' and Jean Takche, \emph{On the Validity of the Direct Sum Conjecture}, SIAM Journal on Computing Volume 15, Issue 4, 1986.
\bibitem{Kru77} Joseph Kruskal, \emph{Three-Way Arrays: Rank and Uniqueness of Trilinear Decompositions, with Application to Arithmetic Complexity and Statistics}, Linear Algebra and Its Applications Volume 18, Issue 2, 1977. 
\bibitem{Raz10} Ran Raz, \emph{Tensor Rank and Lower Bounds for Arithmetic Formulas}, published on the ECCC January 4th, 2010, available at \url{http://www.wisdom.weizmann.ac.il/~ranraz/publications/}
\bibitem{Ten91} Jos M.F. Ten Berge, \emph{Kruskal's Polynomial for $2\times2\times2$ Arrays and a Generalization to $n\times n\times 2$ Arrays}, Psychometrika, Volume 56, December 1991.
\end{thebibliography}
\end{document}